\newtheoremstyle{theorem}% name
  {15pt}          % space above
  {15pt}  % space below
  {\sl}  % bofy font
  {\parindent}
\theoremstyle{theorem}
\newtheorem{theorem}{Theorem}[section]
\newtheoremstyle{defi}% name
  {15pt}          % space above
  {15pt}  % space below
  {\rm}  % bofy font
  {\parindent}     % ident - empty=no indent,  \parindent= paragraph indent
  {\sc}  % thm head font
  {. }    % punctuation after thm head
  { }    % space after thm head: `` ``=normal \newline=linebreak
  {}     % thm head specification
\theoremstyle{defi}
\newcommand{\res}{\mathrm{Res}}
\newcommand{\id}{\mathrm{d}}
\newcommand{\ud}{\mathrm{d}}
\newtheorem{proposition}{Proposition}
 \title[Series representation of the pricing \dots]{Series representation of the pricing Formula for the European option driven by space-time fractional diffusion}
 \author[\normalsize J.-Ph. Aguilar, C. Coste, J. Korbel]{\normalsize Jean-Philippe Aguilar$^{1}$, Cyril Coste$^{2}$, Jan Korbel$^{3,4,5}$ }
\begin{document}

 \vbox to 2.5cm { \vfill }

%%% to make empty space of approx. 2.5cm %%%%%%
%%% will be replaced by Editor with the journal's and publoishers logos %%%%%%%%

 \bigskip \medskip

%%%% Abstract %%%%%%%%%%%%%%%%%%%%%%%%%
 \begin{abstract}

In this paper, we show that the price of an European call option, whose underlying asset price is
driven by the space-time fractional diffusion, can be expressed in terms of rapidly convergent double-series. This series formula is obtained from the Mellin-Barnes representation of the option price with help of residue summation in $\mathbb{C}^2$. We also derive the series representation for the associated risk-neutral factors, obtained by Esscher transform of the space-time fractional Green functions.
 \medskip

{\it MSC 2010\/}: 26A33; 34A08; 91B25; 91G20
                  
                  %33E12, 34A08, 34K37, 35R11, 60G22, ...

 \smallskip

{\it Key Words and Phrases}: Space-time fractional diffusion, European option pricing, Mellin transform, Multidimensional complex analysis 

 \end{abstract}

 \maketitle

%%%%%%% end make title %%%%%%%%%%%%%%%%%%%%%%%%%%%%%%%%%%
 \vspace*{-16pt}

%%%%%%%% begin papers' body %%%%%%%%%%%%%%%%%%%%%%%%%%%%%

\section{Introduction}

In the financial literature, models based on L\'evy (or $\alpha$-stable) distributions~\cite{Zolotarev86} play a prominent role, because such distributions possess heavy tails and thus allow extreme but realistic events, such as sudden jumps of market prices, % $S$, 
that Gaussian models fail to describe; their relevance in financial modeling has been known since the works of Mandelbrot and Fama in the 1960s~\cite{Mandelbrot63,Fama65}. They are also closely related to fractional analysis: when the price log-returns are driven by an $\alpha$-stable distribution with maximal negative asymmetry (or skewness)~\cite{Carr03} then, after some suitable transformations, the price of an option on this asset is solution to a space-fractional PDE with boundary conditions \cite{KK16}. 

Such models recently gained in popularity, because, as noticed by Walter in his epistemological work on on financial models \cite{C13}, technology has changed our perception of risk. At the time when traders could only see a closing price on their screens (that is, the final price on a given trading day), then the Gaussian hypothesis had  a dominating influence on their mind. But when data providers became able to collect and redistribute intraday market data, it became clear that the intraday prices exhibited continuous jumps, and therefore traders and market makers started taking heavy tail models into account.

% In the literature we can find two independent classes of Levy processes: the first one considers spacial variable as fractional including the $\alpha$-stable process and the second one considers time as a L\'evy process (so called market time). The class of models based on the fractionality of the space variable has been much more studied than the class of time fractional models. In his epistemological work on financial models, Christian Walter \cite{C13} has noticed that our perception of risk has changed because of technology. Traders had started to take into account heavy tail models when data providers were able to collect data on intraday frequency. Before that, market makers could only see closed prices on their screens and the Brownian hypothesis had dominating influence in their minds. The traders were mistaking:  the price seems to follow a continuous process and a wrong assumption on the spot price stochastic process had been made. Plotting intraday frequency allowed showing continuous jumps of price. Since then space fractional model were considered realistic and studied intensively. Today the trajectory of a spot price is considered as discontinuous.

With high frequency trading, a new revolution has begun: nowadays, financial engineers need to consider the aggregation of a large number of trades in short periods of time, alternating with non-trading periods. The clock time is no longer adapted to the real market dynamics and the old hypothesis of "market time" seems to be much more suitable. Montroll and Weiss \cite{Montroll65} introduced a very simple idea: instead of considering fixed time steps, they allowed the steps to vary randomly with some statistical distribution. 
% Now the variable is not only describes by its dynamic of spatial jumps but combined with temporal jumps that is associate of the internal time of the system. 
This model, called the Continous Time Random Walk Model (CTRW), is a good framework for modeling the tick by tick dynamic of financial assets: Gorenflo et al. \cite{Gorenflo00} explained why the CTRW is a statistically relevant candidate for modeling German and Italian bond prices (Bund and BTP) and derived the corresponding time fractional PDE. 
% Both representations of Levy processes can be used to model financial prices but each of them has its own limits.

To mix the advantages of both space and time fractionality, space-time double-fractional diffusion has been introduced and extensively studied from the theoretical point of view \cite{Gorenflo99,Zatloukal14,Luchko16,Luchko16a,Mainardi07,Mainardi10,Stynes16}; however, it has only been recently considered in financial modeling \cite{zhu14,Gong,KK16,koleva,Korbel16}. 
% Kleinert and Korbel \cite{KK16} proposed the space-time double-fractional diffusion model to mix the advantages of both types of models. 
It features a more complete structure than the simple composition of the time and space fractional models
% for certain parameters, the resulting probability distributions cannot be decomposed into a convolution of two L\'{e}vy distributions. 
as it exhibits non trivial phenomena including larges jumps and memory effects, which can not be understood as a simple market time re-parametrization of an $\alpha$-stable process. Let us also mention that it has also found many applications in real systems -- financial processes representing one of the most promising fields, where the fractional diffusion and generally fractional calculus has been successfully applied \cite{Akrami,funahashi,Jizba18,kleinert,Kerrs,Pagnini04,Tarasova17,Tarasova18}.

% \section{Brief review of Stable distributions}
% **todo**
% Space-time fractional diffusion is a stochastic process exhibiting non-trivial phenomena, including large jumps and memory effects. It has been studied extensively from the theoretical point of view \cite{Gorenflo99,Zatloukal14,Luchko16,Luchko16a,Mainardi07,Mainardi10,Stynes16}. 

% Additionally, it has found many applications in real systems -- financial processes represent one of the most promising fields, where the fractional diffusion and generally fractional calculus has been successfully applied \cite{Akrami,funahashi,Jizba18,kleinert,Kerrs,Pagnini04,Tarasova17,Tarasova18}. Recently, several studies have been discussing applications of fractional calculus to option pricing \cite{zhu14,Gong,KK16,koleva,Korbel16}.

Nevertheless, when it comes to option pricing, the old Gaussian model first described by Black and Scholes \cite{Black} is still the most used by market practitioners. The main reason is that this model is analytically solvable, that is, the price of an option can be easily expressed in terms of elementary functions of the market parameters. Realistic generalizations of the Black-Scholes model such as switching multifractal models~\cite{Calvet08}, stochastic volatility models~\cite{Heston93,jizba09} or jump processes~\cite{Tankov03} possess, at best, semi-closed pricing formulas or must be solved with help of numerical simulations. And, as for the space-time fractional diffusion model we mentioned earlier, pricing formulas take the form of Mellin-Barnes integral representation~\cite{KK16}, which are intractable for practitioners, and whose numerical estimation can be erroneous and time consuming. The purpose of this paper is to show that it is possible to transform this integral representation into a rapidly converging double-series, which does not involve any advanced mathematical operators. Moreover, one can easily control the numerical precision of the resulting price. The calculation of the series is based on multidimensional Mellin transform and residue summation in $\mathbb{C}^2$.

% The first option pricing model described by Black and Scholes \cite{Black}  is based on ordinary non-fractional diffusion. Nevertheless, despite the popularity caused by its simplicity, the applicability of the formula is limited due to simplified assumptions excluding complex phenomena such as sudden jumps or memory effects, and it is not possible to use it e.g., during abnormal periods. As a result, there have been studied various more realistic generalizations of Black-Scholes model. Let us mention a few examples as regime switching multifractal models~\cite{Calvet08}, stochastic volatility models~\cite{Heston93,jizba09} or jump processes~\cite{Tankov03}. In this paper, we focus on the space-time fractional option pricing model introduced in \cite{KK16}. The resulting formula for option price driven by the space-time fractional diffusion is unfortunately not expressible in terms of elementary functions, but can be expressed by a Mellin-Barnes integral representation. However, this representation is intractable for practitioners and the numerical estimation of the integral can be erroneous and time consuming. We show that it is possible to transform this integral representation into a rapidly converging double-series, which does not involve any advanced mathematical operators. Moreover, one can easily control the numerical precision of the resulting price. The calculation of the series is based on multidimensional Mellin transform and residue summation in $\mathbb{C}^2$.

The paper is organized as follows: Section 2 introduces basic concepts in multiple Mellin-Barnes integrals and discusses the properties of fractional diffusion fractional diffusion and its applications to option pricing. In Section 3, we derive a closed formula for the so-called risk-neutral parameter. The main result of the paper, i.e., the series representation for an European call option driven by the space-time fractional diffusion, is presented in Section 4, with discussion of several special cases. The final section is dedicated to conclusions. 
% For readers' convenience, the main properties of Mellin transform in one and two dimensions are summarized in Appendix.

\section{Preliminary results}

In this section, we briefly summarize the main results about option pricing based on fractional diffusion. The first space-fractional option pricing model was introduced by Carr and Wu \cite{Carr03} and it has been generalized for the case of space-time fractional diffusion in Ref. \cite{KK16}. These models are strongly related to Mellin-Barnes integrals; in order to evaluate these integrals, we start by introducing some concepts in multidimensional complex analysis and residue theory.

\subsection{Mellin transform and residue summation}

We enumerate, without proof, concepts and properties of the Mellin transform in one and two dimensions, that will be useful for deriving the main results of this paper. Proofs and full details on the theory of the one-dimensional Mellin transform are provided in \cite{Flajolet95}. An introduction to multidimensional complex analysis can be found e.g., in the classic textbook \cite{Griffiths78}, and applications to the specific cases of Mellin-Barnes integrals were developed in \cite{Tsikh94,Tsikh97}.

\subsubsection{One-dimensional Mellin transform}
Let us briefly summarize the main properties of Mellin transform:
\begin{enumerate}[label=\textbf{\arabic*}.,wide, labelwidth=!, labelindent=0pt]
\item The Mellin transform of a locally continuous function $f$ defined on $\mathbb{R}^+$ is the function $f^*$ defined by
\begin{equation}\label{Mellin_def}
f^*(s) \, := \, \int\limits_0^\infty \, f(x) \, x^{s-1} \, \id x
\end{equation}
The region of the complex plane $\{ \alpha < Re (s) < \beta \}$ into which the integral \eqref{Mellin_def} converges is often called the fundamental strip of the transform, and sometimes denoted $ < \alpha , \beta  > $.

\item The Mellin transform of the exponential function is, by definition, the Euler Gamma function:
\begin{equation}
\Gamma(s) \, = \, \int\limits_0^\infty \, e^{-x} \, x^{s-1} \, \id x
\end{equation}
with strip of convergence $\{ Re(s) > 0 \}$. Outside of this strip, it can be analytically continued, except at every negative integer $s=-n$ where it admits the singular behavior
\begin{equation}\label{sing_Gamma}
\Gamma(s) \, \underset{s\rightarrow -n}{\sim} \, \frac{(-1)^n}{n!}\frac{1}{s+n} \hspace*{1cm} n\in\mathbb{N}
\end{equation}
\item The inversion of the Mellin transform is performed via an integral along any vertical line in the strip of convergence:
\begin{equation}\label{inversion}
f(x) \, = \, \int\limits_{c-i\infty}^{c+i\infty} \, f^*(s) \, x^{-s} \, \frac{\id s}{2i\pi} \hspace*{1cm} c\in ( \alpha, \beta )
\end{equation}
and notably for the exponential function one gets the so-called \textit{Cahen-Mellin integral}:
\begin{equation}\label{Cahen}
e^{-x} \, = \, \int\limits_{c-i\infty}^{c+i\infty} \, \Gamma(s) \, x^{-s} \, \frac{\id s}{2i\pi} \hspace*{1cm} c>0
\end{equation}
\item When $f^*(s)$ is a ratio of products of Gamma functions of linear arguments:
\begin{equation}
f^*(s) \, = \, \frac{\Gamma(a_1 s + b_1) \dots \Gamma(a_n s + b_n)}{\Gamma(c_1 s + d_1) \dots \Gamma(c_m s + d_m)}
\end{equation}
then one speaks of a \textit{Mellin-Barnes integral}, whose \textit{characteristic quantity} is defined to be
\begin{equation}\label{Delta_1D}
\Delta \, = \, \sum\limits_{k=1}^n \, a_k \, - \, \sum\limits_{j=1}^m \, c_j
\end{equation}
$\Delta$ governs the behavior of $f^*(s)$ when $|s|\rightarrow \infty$ and thus the possibility of computing \eqref{inversion} by summing the residues of the analytic continuation of $f^*(s)$ right or left of the convergence strip:
\begin{equation}
\left\{
\begin{aligned}
& \Delta < 0 \hspace*{1cm} f(x) \, = \, -\sum\limits_{Re(s_N) > \beta} \, \res_{S_N} f^*(s)x^{-s}  \\
& \Delta > 0 \hspace*{1cm} f(x) \, = \, \sum\limits_{Re(s_N) < \alpha} \, \res_{S_N} f^*(s)x^{-s}
\end{aligned}
\right.
\end{equation}
For instance, in the case of the Cahen-Mellin integral one has $\Delta = 1$ and therefore:
\begin{equation}
e^{-x} \, = \, \sum\limits_{Re(s_n)<0} \res_{s_n} \Gamma(s) \, x^{-s} \, = \, \sum\limits_{n=0}^{\infty} \, \frac{(-1)^n}{n!}x^n
\end{equation}
as expected from the usual Taylor series of the exponential function.
\end{enumerate}

\subsubsection{Multidimensional Mellin transform} Mellin transform can also be extended to the multidimensional domain. Below are the main properties of the multidimensional Mellin transform:
\begin{enumerate}[label=\textbf{\arabic*}.,wide, labelwidth=!, labelindent=0pt]
\item Let $\underline{a}_k$, $\underline{c}_j$, be vectors in $\mathbb{C}^2$, and $b_k$, $d_j$ some complex numbers. Let $\underline{t}:=\begin{bmatrix} t_1 \\ t_2 \end{bmatrix}$ and $\underline{c}:=\begin{bmatrix} c_1 \\ c_2 \end{bmatrix}$ in $\mathbb{C}^2$. The symbol "." denotes the Euclidean scalar product. We speak of a Mellin-Barnes integral in $\mathbb{C}^2$ when one deals with an integral of the type
\begin{equation}
\int\limits_{\underline{c}+i\mathbb{R}^2} \, \omega
\end{equation}
where $\omega$ is a complex differential 2-form which reads
\begin{equation}
\omega \, = \, \frac{\Gamma(\underline{a}_1.\underline{t}_1 + b_1) \dots \Gamma(\underline{a}_n.\underline{t}_n + b_n)}{\Gamma(\underline{c}_1.\underline{t}_1 + d_1) \dots \Gamma(\underline{c}_m.\underline{t}_m + b_m)} \, x^{-t_1} \, y^{-t_2} \, \frac{\id t_1}{2i\pi} \wedge \frac{\id t_2}{2i\pi} \quad \, x,y \in\mathbb{R}
\end{equation}
The singular sets induced by the singularities of the Gamma functions
\begin{equation}
D_k \, := \, \{ \underline{t}\in\mathbb{C}^2 \, , \, \underline{a}_k.\underline{t}_k + b_k = -n_k \, , \, n_k \in\mathbb{N}   \} \,\,\,\, \, k=0 \dots n
\end{equation}
are called the \textit{divisors} of $\omega$. The \textit{characteristic vector} of $\omega$ is defined to be
\begin{equation}
\Delta \, = \, \sum\limits_{k=1}^n \underline{a}_k \, - \, \sum\limits_{j=1}^m \underline{c}_j
\end{equation}
and the \textit{admissible half-plane}:
\begin{equation}
\Pi_\Delta \, := \, \{ \underline{t}\in\mathbb{C}^2 \, , \, Re( \Delta . \underline{t} ) \, < \, Re( \Delta . \underline{c} )\,  \}
\end{equation}
\item Let the $\rho_k$ in $\mathbb{R}$, the $h_k:\mathbb{C}\rightarrow\mathbb{C}$ be linear applications and $\Pi_k$ be a subset of $\mathbb{C}^2$ of the type
\begin{equation}\label{Pik}
\Pi_k \, := \, \{ \underline{t}\in\mathbb{C}^2, \, Re(h_k(t_k)) \, < \, \rho_k \}\, .
\end{equation}
A \textit{cone} in $\mathbb{C}^2$ is a Cartesian product
\begin{equation}
\Pi \, = \, \Pi_1 \times \Pi_2
\end{equation}
where $\Pi_1$ and $\Pi_2$ are of the type \eqref{Pik}. Its \textit{faces} $\varphi_k$ are
\begin{equation}
\varphi_k \, := \, \partial \Pi_k \hspace*{1cm} k=1,2
\end{equation}
and its \textit{distinguished boundary}, or \textit{vertex} is
\begin{equation}
\partial_0 \, \Pi \, := \, \varphi_1 \, \cap \, \varphi_2\, .
\end{equation}
\item Let $1<n_0<n$. We group the divisors $D=\cup_{k=0}^n \, D_k$ of the complex differential form $\omega$ into two sub-families
\begin{equation}
D_1 \, := \, \cup_{k=1}^{n_0} \, D_k, \,\,\, \,\,\, D_2 \, := \, \cup_{k=n_0+1}^{n} \, D_k,  \hspace*{1cm}  D \, = \, D_1\cup D_2.
\end{equation}
We say that a cone $\Pi\subset\mathbb{C}^2$ is \textit{compatible} with the divisors family $D$ if:
\begin{enumerate}
\item[-] \, Its distinguished boundary is $\underline{c}$;
\item[-] \, Every divisor $D_1$ and $D_2$ intersect at most one of his faces:
\begin{equation}
D_k \, \cap \, \varphi_k \, = \, \emptyset \hspace*{1cm} \mathrm{for} \ k=1,2.
\end{equation}
\end{enumerate}

\item Residue theorem for multidimensional Mellin-Barnes integral \cite{Tsikh94,Tsikh97}: If $\Delta \neq 0$ and if $\Pi\subset\Pi_\Delta$ is a compatible cone located into the admissible half-plane, then
\begin{equation}\label{res_thm_C2}
\int\limits_{\underline{c}+i\mathbb{R}^2} \, \omega \, = \, \sum\limits_{\underline{t}\in\Pi\cap(D_1 \cap D_2)} \res_{\underline{t}} \, \omega
\end{equation}
and the series converges absolutely. The residues are to be understood as the "natural" generalization of the Cauchy residue, that is:
\begin{multline}
\res_0 \, \left[ f(t_1,t_2) \, \frac{\id t_1}{2i\pi t_1^{\alpha_1}} \wedge \frac{\id t_2}{2i\pi t_1^{\alpha_2}}  \right] \, = \, \frac{1}{(\alpha_1-1)!(\alpha_2-1)!} \times \\ \frac{\partial ^{\alpha_1+\alpha_2-2}}{\partial t_1^{\alpha_1-1} \partial t_2^{\alpha_2-1} } f(t_1,t_2) |_{t_1=t_2=0}
\end{multline}
where $\alpha_1$ and $\alpha_2$ are strictly positive integers. 
\end{enumerate}

\subsection{Space-time fractional diffusion}
Space-time (double)-fractional diffusion equation is a generalization of the ordinary diffusion equation for non-natural derivatives. One of the most popular forms is based on Caputo time-fractional derivative and Riesz-Feller fractional derivative. It can be expressed as
\begin{equation}\label{double_fractional}
\left({}^\ast_0 \mathcal{D}^\gamma_t -  \mu [{}^\theta \mathcal{D}^\alpha_x] \right) g(x,t) = 0\,
\end{equation}
where $x \in \mathds{R}$ and $t \in [0,\infty)$. Parameters can acquire the following values:  $\alpha \in (0,2]$, $\gamma \in (0,\alpha]$. Asymmetry parameter $\theta$ is defined in the so-called \emph{Feller-Takayasu diamond} $|\theta| \leq \min \left\{\alpha, 2-\alpha \right\}$.  ${}^\ast_0 \mathcal{D}^\gamma_t$ denotes the \emph{Caputo fractional derivative}, which is defined as
\begin{equation}
{}^\ast_{t_0} \mathcal{D}^\nu_t f(t) =
\frac{1}{\Gamma(\lceil \nu \rceil - \nu)} \int_{t_0}^t \frac{f^{\lceil \nu \rceil}(\tau)}{(t - \tau)^{ \nu +1-\lceil \nu \rceil}} \ud \tau \end{equation}
and ${}^\theta \mathcal{D}^\alpha_x$ denotes the \emph{Riesz-Feller fractional derivative}, which is usually defined via its Fourier image as

\begin{equation}
\mathcal{F}[{}^\theta \mathcal{D}^\nu_x
f(x)](k) = - {}^\theta \psi^\nu (k)F[f(x)](k) = - \mu |k|^\nu e^{i(\mathrm{sign} k) \theta \pi /2} \mathcal{F}[f(x)](k)\, .
\end{equation}
Naturally, both derivatives become ordinary derivative operators for the order of the derivative is  a natural number. According to the order of temporal-derivative $\gamma$, the equation requires one or two conditions. Apart from standard initial condition
\begin{equation}
g(x,t=0) = f_0(x)\, ,
\end{equation}
it is for $\gamma >1$ necessary to input the condition
\begin{equation}
\frac{\partial g(x,t)}{\partial t} |_{t=0} = f_1(x)\, .
\end{equation}
Typically, we choose $f_1(x) \equiv 0$ (this is also used in the rest of this paper).

The space-time fractional diffusion has been studied by many authors, perhaps the most famous is the paper by Gorenflo et al.\cite{Gorenflo99}, where it is also possible to find all technical details. Here, we briefly revise several important aspects of the space-time fractional diffusion. First, the scaling of the fundamental solution (also called \emph{Green function}) $g(x,t)$ is given by the scaling exponent $\Omega$, so we obtain the scaling
\begin{equation}
g(x,t) = \frac{1}{t^\Omega} \, G \left(\frac{x}{t^\Omega}\right)
\end{equation}
where $\Omega = \gamma/\alpha$. Second, for particular values of parameters, it is possible to recover well-known distributions. For $\gamma=1$, i.e., space-fractional diffusion, we recover L\'{e}vy diffusion driven by $\alpha$-stable distribution. For $\gamma=1$ and $\alpha=2$ we recover normal diffusion driven by Gaussian distribution.

In order to express the solution $g(x,t)$, it is usual to transform Eq.~\eqref{double_fractional} into its Fourier-Laplace image ($x \stackrel{\mathcal{F}}{\rightarrow} k$, $t \stackrel{\mathcal{L}}{\rightarrow} s$). Let us remind the initial conditions $f_1(x) = 0$ and $f_0(x) = \delta(x)$. This leads to the algebraic equation

\begin{equation}\label{lapfur}
\hat{\bar{g}}^\theta_{\alpha,\gamma}(k,s) s^\gamma -s^{\gamma-1} + {}^\theta \psi^\alpha(k) \hat{\bar{g}}^\theta_{\alpha,\gamma}(k,s) = 0\, .
\end{equation}

The original solution $g_{\alpha,\gamma}^\theta(x,t)$ can be expressed by the inverse Fourier-Laplace transform. We show two important representations of the fundamental solution of Eq.~\eqref{double_fractional}.

The first representation, in detail discussed in Ref.~\cite{Zatloukal14}, is important mainly for the case $\gamma <1$ and is based on \emph{Schwinger trick} $1/A = \int_0^\infty e^{- l A} \ud l$. This enables to rewrite the expression $1/(s^\gamma + {}^\theta \psi^\alpha(k))$ as $\int_0^\infty e^{-l s^\gamma} e^{- l {}^\theta \psi^\alpha(k))} \ud l$ so it is possible to separate functions depending of $s$ and $k$.  Consequently, it is possible to rewrite the distribution as an integral composition of two kernels

\begin{equation}\label{smearing}
g_{\alpha,\gamma}^\theta(x,t) = \int_0^\infty g_\gamma(t,l) g_\alpha^\theta(l,x)\, \ud l
\end{equation}
where $g_\gamma$ and $g_{\alpha}^\theta$ are solutions of single-fractional diffusion equations
\begin{eqnarray}
\frac{\partial g_\gamma(t,l)}{\partial l} &=& {}^\ast_0 \mathcal{D}^\gamma_t \, g_\gamma(t,l)\, ,\\
\frac{\partial g_\alpha^\theta(l,x)}{\partial l} &=&
{}^\theta \mathcal{D}^\alpha_x \, g_\alpha^\theta(l,x)\, .
\end{eqnarray}

Each equation represents one class of single-fractional diffusion processes. The first equation describes time-fractional diffusion, while the second represents the space-fractional diffusion leading to $\alpha$-stable distributions. It is formally possible to use this representation also for $\gamma >1$, but in this case is $g_\gamma(t,l)$ not anymore positive and therefore cannot be interpreted as a \emph{smearing kernel} (details can be found in \cite{Zatloukal14,KK16}). On the other hand, it can be useful to use the Schwinger representation for calculation of some derived quantities, as e.g., moments of the distribution. We demonstrate this approach in Section \ref{sec:RN}, where we use this representation in order to calculate the risk-neutral factor corresponding to the space-time fractional diffusion.

Alternatively, Eq. \eqref{lapfur} can be solved by Mellin transform technique resulting into the Mellin-Barnes integral. The inverse Laplace transform of Eq.~\eqref{lapfur} reads~\cite{Gorenflo99}:
\begin{equation}
\hat{g}(t,k) = E_{\gamma}({}^\theta \psi^\alpha(k) t^\gamma)\, ,
\end{equation}
where $E_\gamma(x) = \sum_{n=0}^\infty \frac{x^n}{\Gamma(\gamma n + 1)}$ is the Mittag-Leffler function. It is possible to represent it via the Mellin-Barnes integral as
\begin{equation}
E_a(z) = \frac{1}{2 \pi i } \int\limits_{c - i \infty}^{c+i \infty} \frac{\Gamma(t_1)\Gamma(1-t_1)}{\Gamma(1-a t_1)} (-z)^{-t_1} \ud t_1\, ,
\end{equation}
where $c \in (0,1)$, which is given by the Mellin transform theorem \cite{Flajolet95}. After plugging back and straightforward calculation, one ends with Mellin-Barnes representation of space-time fractional Green function as

\begin{eqnarray}\label{Green_function_DF}
g_{\alpha,\gamma}^\theta(x,t) = \frac{ 1 }{\alpha x}\frac{1}{2 \pi i}   \int\limits_{c_1 - i \infty}^{c_1 + i \infty}
\Gamma \left[
         \begin{array}{ccc}
           \frac{t_1}{\alpha} & 1-\frac{t_1}{\alpha} & 1-t_1 \\
           1-\frac{ \gamma t_1}{\alpha} & \frac{\alpha-\theta}{2\alpha}t_1 & 1 - \frac{\alpha-\theta}{2\alpha}t_1 \\
         \end{array}
       \right]\nonumber \\
       \times \left(\frac{x}{(-\mu t^{\gamma})^{1/\alpha}}\right)^{t_1} \, \ud t_1
\end{eqnarray}
where \emph{Gamma fraction} is defined as $\Gamma\left[\begin{array}{ccc}
                                                         x_1 & \dots & x_n \\
                                                         y_1 & \dots & y_m
                                                       \end{array}\right] = \frac{\Gamma(x_1)\dots\Gamma(x_n)}{\Gamma(y_1)\dots\Gamma(y_m)}$.
%Representation \eqref{Green_function_DF} holds as soon as $0<c_1<\min\{\alpha,1\}$.

\section{Price of European call option in the space-time fractional model}

We recall the principles of option pricing and their applications to double fractional diffusion models. At the end of this section, we notably focus on series representation of the risk-neutral factor, which generalizes the well-known Black-Scholes risk-neutral factor $\frac{\sigma^2}{2}$ to our wider class of models.

\subsection{Option pricing}

Option pricing consists in two important aspects. First, a realistic model of underlying price and second, an appropriate hedging policy which maximally eliminates the risk. Optimally, the risk should be completely eliminated. Based on the assumption of \emph{efficient market}, the most popular hedging policy is the \emph{risk-neutral} pricing. In this scenario, the option price of European type, i.e., option with given maturity time, is calculated as
\begin{equation}\label{expect}
C(S_t,K,r,\sigma,t) = e^{-r \tau}\langle C(S_T,K,r,\sigma,T) | \mathcal{F}_t \rangle_\mathbb{Q}
\end{equation}
where $\tau=T-t$. %, $H(T)$ is the initial condition, i.e., derivative value at the maturity time.
$\mathbb{Q}$ is the so-called \emph{risk-neutral measure}, equivalent to original probability measure $\mathbb{P}$ describing the price evolution. %The neutral measure is given by \emph{Girsanov theorem}~\cite{}.
For exponential processes described by its log-returns, the risk-neutral measure is given by \emph{Esscher transform} \cite{Gerber93}. The terminal condition at $t=T$ (or equivalently, the initial condition for $\tau=0$) is given by the option's payoff, which, for a call option, is equal to
\begin{equation}
 C(S_T,K,r,\sigma,T) \, = \, \max \{ S_T - K , 0    \} \, =: \, [S_T-K]^+\, .
\end{equation}

For the space-time fractional model, the call option price \eqref{expect} can be expressed as the convolution of the Green function \eqref{Green_function_DF} and the payoff (after some suitable change of variables) \cite{KK16}:
\begin{equation}\label{propagator}
C_{\alpha,\gamma}^\theta (S,K,r,\mu,\tau) = e^{-r \tau} \int\limits_{-\infty}^\infty  \ \left[S e^{\tau(r-q+\mu)+y} - K\right]^+ g_{\alpha,\gamma}^\theta(y,\tau) \ud y\, .
\end{equation}
Note that in our future calculations we will take, without loss of generality, a dividend $q=0$ in order to simplify the notations. Factor $\mu$ appearing the ``modified payoff'' is a result of the risk-neutral measure $\mathbb{Q}$, which is obtained by the Esscher transform of the original measure $\mathbb{P}$. Details can be found in~\cite{KK16}. It is possible to calculate this so-called \emph{risk-neutral factor} $\mu$ as
\begin{equation}
\mu = - \log \int e^y g_{\alpha,\gamma}^\theta(y,\tau=1) \ud y\,
\end{equation}
when the integral exists. Obviously, the necessary condition of integral convergence is exponential decay in positive tail of the probability distribution. This can be assured as soon as the \textit{maximal (negative) asymmetry condition} holds, that is when $\theta = \alpha-2$, $\alpha>1$. This fully asymmetric case was for space-fractional diffusion discussed in Ref.~\cite{Carr03}, for space-time fractional diffusion in Refs.~\cite{KK16,Korbel16}.

Let us briefly discuss the interpretation of main model parameters, i.e., the derivative orders $\alpha$, $\gamma$ and $\sigma$ in option pricing. First, parameter $\sigma$ has the role of scale parameter and can be interpreted as the market risk. This means that if $\sigma$ increases, uncertainty in the market increases and\emph{all} option prices also increase and vice versa. On the other hand, this is not the case for parameters $\alpha$ and $\gamma$. As discussed in \cite{KK16,Korbel16}, they play the role of \emph{risk redistribution} parameters. The role of $\alpha$ characterizes spatial redistribution, because with decreasing $\alpha$ the negative tail of the distribution becomes ``heavier'' (the decay is slower) and the probability of large drops increases dramatically. This has been extensively discussed in \cite{Carr03}. Similarly, parameter $\gamma$ characterizes ``temporal'' risk redistribution, caused e.g., by some memory effects \cite{Tarasova18,Teyssiere}. Thus, it increases the risk for short/long-term options, while decreases the risk for the other type. The presence of space-time risk redistribution has the impact on various observable phenomena, e.g., to the shape of volatility-smile \cite{AK18}.

%\subsection{The tick by tick dynamic: From the Continuous Time Random Walk Model to the double space-time diffusion}

%Let us also mention the origin of the space-time fractional equation as the driving process for market prices. 

%First introduced in physics by Montroll and Weiss [Cite Montroll], numerous authors [cite Scalas, Paul \& Baschnagel, Mainardi, Raberto, Gorenflo] show that the Continous Time Random Walk Model (CTRW) can provide a good model for modeling the tick by tick dynamic of financial assets.
%It is a well documented fact that the calendar time is not really appropriate measure of time because assets are not continuously traded during the day [cite Walter, Bouchaud] and markets exhibit "memory".
%Gorenflo [Cite Gorenflo] in their paper explain why the CTRW is a statistically relevant candidate for modeling Bund and the BTP and derive the space-time fractional PDE.
%Here we will just give and economic intuition to the reader for the model but for a mathematical proof, and a statistical justification of the underlying dynamic one needs to refer to [Cite Gorenflo].
%At position $x$ and time $t$ the assets make a jump of length
%\begin{equation}
%\psi_i = x(t_i) - x(t{i-1}) 
%\end{equation}
%Denoting $\tau_i=t_i-t_{i-1}$ the waiting times and supposing that the asset is located at $x_0=0$ for $t=0$, it means that the initial condition of the probability density function $p(x,t)$ is $p(x,0)=\delta(x)$

\subsection{Risk-neutral factor for the space-time fractional diffusion}\label{sec:RN}

 It is unfortunately not always possible to express the Risk-neutral factor $\mu$ analytically. Nevertheless in the spatial-diffusion case (that is for $\gamma=1$) , it is known that ~\cite{Carr03}
\begin{equation}\label{mu_Carr_Wu}
\mu_1 = \left( \frac{\sigma}{\sqrt{2}}\right) ^\alpha \sec{\frac{\pi \alpha}{2}}
\end{equation}
when the maximal negative asymmetry assumption is fulfilled. For space-time fractional case it is possible to derive an integral representation based on Eq.~\eqref{smearing} and to to rewrite $\mu$ as
\begin{eqnarray}\label{muDF1}
\mu &=& - \log \int\limits_{-\infty}^\infty \, \ud y \, e^y \int\limits_0^\infty \, \ud l \,  g_\gamma(\tau=1,l) g_\alpha^\theta(l,x) \nonumber \\
&=& - \log \int\limits_0^\infty \, \ud l g_\gamma(\tau = 1,l) e^{-\left(\frac{\sigma}{\sqrt{2}} l \right)^\alpha \sec \left(\frac{\pi \alpha}{2}\right) } \, .
\end{eqnarray}
The last representation was obtained by change of integrals. For a Caputo time fractional derivatives, it can be shown \cite{KK16,Gorenflo99} that for $\gamma < 1$ 
\begin{equation}
g_\gamma(\tau,l) \, = \, \frac{1}{\tau^\gamma} \, M_\gamma\left( \frac{l}{\tau^\gamma} \right)
\end{equation}
where $M_\nu(z)$ is a function of Wright type, which admits the following Mellin-Barnes representation \cite{Mainardi10}:
\begin{equation}
M_\nu(z) \,  =  \, \int\limits_{c-i\infty}^{c+i\infty} \, \frac{\Gamma(s)}{\Gamma(\nu s + 1 - \nu) } \, z^{-s} \, \frac{\ud s}{2i\pi} \hspace*{1cm} c>0\, .
\end{equation}
Interestingly, the Mellin-Barnes representation is also valid for the case $\gamma >1$, but it does not lead to a positive smearing kernel. Plugging into \eqref{muDF1} and recalling \eqref{mu_Carr_Wu} we obtain:
\begin{equation}\label{muDF2}
\mu \, = \, - \log \int\limits_{c-i\infty}^{c+i\infty} \, \frac{\Gamma(s)}{\Gamma(\gamma s + 1 - \gamma) } \, \int\limits_0^\infty l^{-s} e^{-l^\alpha \, \mu_1} \, \ud l \, \frac{\ud s}{2i\pi}\, .
\end{equation}
The integral over $l$ is straightforward to perform, because it is the integral representation of the Gamma function \cite{Abramowitz72}:
\begin{equation}\label{mugamma}
\int\limits_0^\infty l^{-s} \, e^{-l^\alpha \mu_1} \, \ud l = \, \frac{1}{\alpha}\Gamma\left(\frac{1-s}{\alpha}\right) \, \mu_1^{\frac{s-1}{\alpha}}\, .
\end{equation}
The integral converges in the strip $Re(s)<1$. As a result, we can formulate a proposition, which is a simple consequence of \eqref{muDF2} and \eqref{mugamma}:
\begin{proposition}
Let $\sigma>0$, $1<\alpha \leq 2$, and  $\mu_1=\left( \frac{\sigma}{\sqrt{2}}\right) ^\alpha \sec{\frac{\pi \alpha}{2}}$, then the risk-neutral factor $\mu$ admits the representation:
\begin{equation}\label{mu_MB}
\mu \, = \, - \log \left[ \frac{1}{\alpha} \, \int\limits_{c-i\infty}^{c+i\infty} \, \frac{\Gamma(s)\Gamma(\frac{1-s}{\alpha})}{\Gamma(\gamma s + 1 - \gamma) } \, \mu_1^{\frac{s-1}{\alpha}} \, \frac{\ud s}{2i\pi} \right] \hspace*{1cm} 0 < c < 1\, .
\end{equation}
\end{proposition}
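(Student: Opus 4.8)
\emph{Proof plan.}
Since the proposition is announced as an immediate consequence of the two displays \eqref{muDF2} and \eqref{mugamma}, the heart of the argument is a single substitution; what needs care is the choice of integration contour and the justification of interchanging the two integrations. The plan is to start from \eqref{muDF2}, which already writes $\mu$ as minus the logarithm of an iterated integral --- an $s$-contour integral whose integrand still contains the inner integral $\int_0^\infty l^{-s}e^{-l^\alpha\mu_1}\,\ud l$ --- and to evaluate that inner integral. For completeness one first recalls how \eqref{muDF2} is obtained: insert the smearing decomposition \eqref{smearing} at $t=1$ into $\mu=-\log\int_{-\infty}^\infty e^y\,g^\theta_{\alpha,\gamma}(y,1)\,\ud y$, exchange the $y$- and $l$-integrals, identify the inner $y$-integral with the exponential moment of the maximally negatively skewed $\alpha$-stable kernel --- which by the Carr--Wu formula \eqref{mu_Carr_Wu} and the scaling of $g^\theta_\alpha$ in its first slot equals $e^{-l^\alpha\mu_1}$, i.e.\ \eqref{muDF1} --- and finally substitute for $g_\gamma(1,l)=M_\gamma(l)$ its Mellin--Barnes representation, whose integrand is holomorphic in the contour parameter on $\mathrm{Re}(s)>0$.

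The computation is then just the $l$-integral: the change of variable $u=\mu_1 l^{\alpha}$ turns $\int_0^\infty l^{-s}e^{-l^\alpha\mu_1}\,\ud l$ into $\frac{1}{\alpha}\,\mu_1^{(s-1)/\alpha}\int_0^\infty u^{(1-s)/\alpha-1}e^{-u}\,\ud u=\frac{1}{\alpha}\,\Gamma\!\big(\frac{1-s}{\alpha}\big)\,\mu_1^{(s-1)/\alpha}$, which is exactly \eqref{mugamma}. As a genuine convergent integral this requires $\mathrm{Re}\big(\frac{1-s}{\alpha}\big)>0$, i.e.\ $\mathrm{Re}(s)<1$, and $\mu_1>0$. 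Intersecting the constraint $\mathrm{Re}(s)<1$ from the $l$-integral with the constraint $\mathrm{Re}(s)>0$ coming from the pole of $\Gamma(s)$ at $s=0$ (and from the validity strip of the $M_\gamma$-representation) pins the contour down to $0<c<1$, as displayed in \eqref{mu_MB}. Plugging the value of the $l$-integral back into \eqref{muDF2} and grouping $\Gamma(s)$, $\Gamma\!\big(\frac{1-s}{\alpha}\big)$ and $1/\Gamma(\gamma s+1-\gamma)$ into one \textit{Gamma fraction} yields \eqref{mu_MB}.

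The main obstacle is everything I glossed over about convergence. First, the two interchanges: the $y\leftrightarrow l$ swap behind \eqref{muDF1} is handled, for $\gamma<1$ where $M_\gamma$ and $g^\theta_\alpha$ are nonnegative, by Tonelli's theorem; the $s\leftrightarrow l$ swap from \eqref{muDF2} to \eqref{mu_MB} is handled on the half-plane $\mu_1>0$ by Fubini's theorem, using Stirling's formula along $\mathrm{Re}(s)=c$ --- the ratio $\Gamma(s)/\Gamma(\gamma s+1-\gamma)$ decays there like $|\mathrm{Im}\,s|^{(1-\gamma)(c-1)}e^{-(1-\gamma)\pi|\mathrm{Im}\,s|/2}$, i.e.\ exponentially when $\gamma<1$, and the extra factor $\Gamma\!\big(\frac{1-s}{\alpha}\big)$ only reinforces this decay. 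Second, and more seriously, for the values of interest $1<\alpha\le 2$ one has $\sec\frac{\pi\alpha}{2}<0$, hence $\mu_1<0$, so the inner integral in \eqref{mugamma} does not converge literally: \eqref{mugamma}, and therefore \eqref{mu_MB}, must be read through analytic continuation in $\mu_1$ from the half-plane $\mu_1>0$, and the continued Mellin--Barnes integral in \eqref{mu_MB} is then only conditionally convergent in the regime where $\mu$ itself is finite (equivalently, it is to be understood through its residue series, the characteristic quantity here being $\Delta=\frac{\alpha-1}{\alpha}-\gamma\neq0$ generically). I expect the careful execution of this analytic-continuation step --- keeping the contour inside $0<\mathrm{Re}(s)<1$ throughout and checking that the continued right-hand side of \eqref{mu_MB} still represents the defining integral of $\mu$ --- to be the real work, the remaining algebra being routine.
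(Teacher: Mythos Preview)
Your proposal is correct and follows essentially the same route as the paper: the authors present the proposition as ``a simple consequence of \eqref{muDF2} and \eqref{mugamma}'', i.e.\ exactly the substitution you describe, with the derivation of \eqref{muDF2} via the smearing decomposition and the Mellin--Barnes representation of $M_\gamma$ laid out in the preceding text. In fact you go further than the paper in worrying about the interchange of integrals and the sign of $\mu_1$ (the paper writes $\int_0^\infty l^{-s}e^{-l^\alpha\mu_1}\,\ud l$ without comment even though $\mu_1<0$ for $1<\alpha\le 2$), so your analytic-continuation caveat is a genuine addition rather than a gap.
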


\noindent
Let us now express \eqref{mu_MB} in the series representation, which can be more convenient for numerical applications:
\begin{proposition}
Let $\sigma>0$ and $1<\alpha \leq 2$, and $\mu_1=\left( \frac{\sigma}{\sqrt{2}}\right) ^\alpha \sec{\frac{\pi \alpha}{2}}$, then for any $\gamma > 1 - \frac{1}{\alpha}$ the risk-neutral factor $\mu$ can be expressed in the form of the absolutely convergent series:
\begin{equation}\label{mu_series}
\mu \, = \, - \log \, \sum\limits_{n=0}^{\infty} \, \frac{(-1)^n \Gamma(1+\alpha n)}{n!\Gamma(1+\gamma\alpha n)} \mu_1^n\, .
\end{equation}
\end{proposition}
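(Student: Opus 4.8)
The plan is to read the bracketed integral in \eqref{mu_MB} as a one–dimensional Mellin--Barnes integral and to evaluate it by residue summation, invoking property \textbf{4} of the one–dimensional Mellin transform recalled in Section~2. First I would put the integrand into the canonical form $f^*(s)\,x^{-s}$: since $\mu_1^{\frac{s-1}{\alpha}}=\mu_1^{-1/\alpha}\big(\mu_1^{-1/\alpha}\big)^{-s}$, the bracket equals $\tfrac1\alpha\,\mu_1^{-1/\alpha}\int_{c-i\infty}^{c+i\infty}f^*(s)\,x^{-s}\,\tfrac{\ud s}{2i\pi}$ with $x:=\mu_1^{-1/\alpha}$ and
\[
f^*(s)\,=\,\frac{\Gamma(s)\,\Gamma\!\left(\tfrac{1-s}{\alpha}\right)}{\Gamma(\gamma s+1-\gamma)}\,.
\]
Its fundamental strip is $\langle 0,1\rangle$ ($Re(s)>0$ coming from $\Gamma(s)$, $Re(s)<1$ from $\Gamma(\tfrac{1-s}{\alpha})$), which is consistent with the requirement $0<c<1$, and by \eqref{Delta_1D} its characteristic quantity is $\Delta=1-\tfrac1\alpha-\gamma$. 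The key remark is that the hypothesis $\gamma>1-\tfrac1\alpha$ is \emph{equivalent} to $\Delta<0$; property \textbf{4} then says that the integral equals minus the sum of the residues of $f^*(s)\,x^{-s}$ over the poles lying to the right of the strip, and that the resulting series converges absolutely.

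Next I would enumerate those poles. To the right of $\{Re(s)=1\}$ the factor $\Gamma(s)$ is holomorphic and the reciprocal $1/\Gamma(\gamma s+1-\gamma)$ is entire, so the only singularities are the simple poles of $\Gamma(\tfrac{1-s}{\alpha})$, located at $s_n:=1+\alpha n$, $n\in\mathbb N$. One checks that $\gamma s_n+1-\gamma=1+\gamma\alpha n>0$, so $1/\Gamma(\gamma s+1-\gamma)$ neither vanishes nor blows up at $s_n$, and that all zeros of $1/\Gamma(\gamma s+1-\gamma)$ lie in $\{Re(s)<1\}$, so no pole is cancelled. Applying the singular expansion \eqref{sing_Gamma} with the substitution $w=\tfrac{1-s}{\alpha}$ gives $\res_{s_n}\Gamma(\tfrac{1-s}{\alpha})=\tfrac{(-1)^{n+1}\alpha}{n!}$, hence
\[
\res_{s_n}\,f^*(s)\,x^{-s}\,=\,\frac{(-1)^{n+1}\alpha}{n!}\,\frac{\Gamma(1+\alpha n)}{\Gamma(1+\gamma\alpha n)}\,x^{-1-\alpha n}\,,\qquad x^{-1-\alpha n}=\mu_1^{1/\alpha}\,\mu_1^{\,n}\,.
\]

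Finally I would assemble the pieces: multiplying by the prefactor $\tfrac1\alpha\mu_1^{-1/\alpha}$ and the overall minus sign supplied by property \textbf{4}, the factors $\alpha$ and $\mu_1^{\pm1/\alpha}$ cancel and the bracket in \eqref{mu_MB} collapses to $\sum_{n\ge0}\tfrac{(-1)^{n}\Gamma(1+\alpha n)}{n!\,\Gamma(1+\gamma\alpha n)}\mu_1^{\,n}$, which is precisely the argument of the logarithm in \eqref{mu_series}. Absolute convergence is already part of the conclusion of property \textbf{4}, but it can also be checked directly: by Stirling the logarithm of the $n$-th coefficient behaves like $\alpha\Delta\,n\log n+O(n)$, so the series converges super-exponentially exactly when $\Delta<0$, i.e. $\gamma>1-\tfrac1\alpha$, and diverges otherwise — which pins down the stated hypothesis. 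I expect the only delicate point to be the pole bookkeeping: verifying that the $s_n$ are simple, that none of them is annihilated by a zero of the reciprocal Gamma, and that the boundary pole $s_0=1$ is correctly kept strictly to the right of the contour by choosing $c<1$; the residue computation itself is routine.
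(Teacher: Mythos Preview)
Your proof is correct and follows essentially the same route as the paper: compute the characteristic quantity $\Delta=1-\tfrac{1}{\alpha}-\gamma<0$, conclude that the Mellin--Barnes integral \eqref{mu_MB} equals minus the sum of residues to the right of the strip, identify those as the simple poles of $\Gamma(\tfrac{1-s}{\alpha})$ at $s=1+\alpha n$, and evaluate. Your write-up is in fact slightly more careful than the paper's own proof --- you explicitly check that no pole is cancelled by a zero of $1/\Gamma(\gamma s+1-\gamma)$, and you supply an independent Stirling argument for absolute convergence --- but the underlying argument is the same.
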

\begin{proof}
The characteristic quantity $\Delta$ (see \eqref{Delta_1D}) associated to the Mellin-Barnes integral \eqref{mu_MB}, which governs its decay at infinity, is equal to
\begin{equation}
\Delta \, = \, 1 - \frac{1}{\alpha} - \gamma \, < \, 0
\end{equation}
and therefore the line-integral in \eqref{mu_MB} is equal to minus the sum of the residues located in the right half plane $\{Re ( s) > 1 \}$.
\begin{figure}[t]
\centering
\includegraphics[scale=0.5]{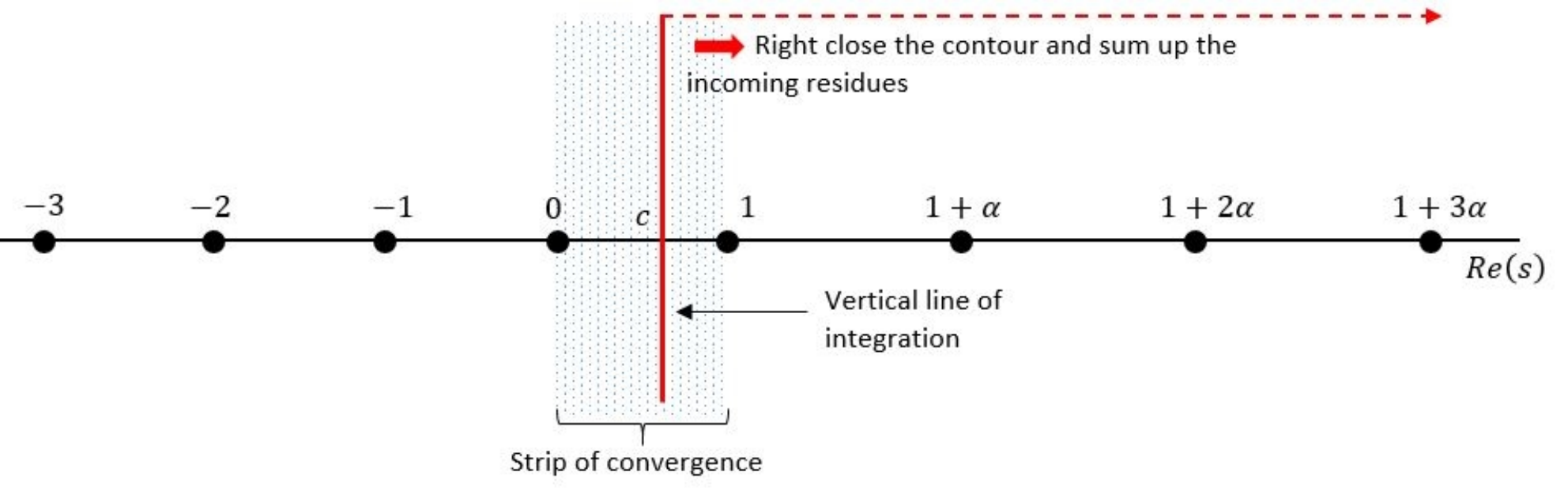}
\caption{Singularities for the Mellin-Barnes integral \eqref{mu_MB}. The poles located left of the convergence strip in $s=0,-1,-2,\dots$ are induced by the $\Gamma(s)$ term, those located right of the convergence strip in $s=1,1+\alpha,1+2\alpha,\dots$ are induced by the $\Gamma(\frac{1-s}{\alpha})$ term.}
\label{fig1}
\end{figure}
They are induced by the poles of the $\Gamma(\frac{1-s}{\alpha})$ function, which arise at every negative argument, that is when $s=1+\alpha n, \, n \in\mathbb{N}$. Around theses points, the $\Gamma(\frac{1-s}{\alpha})$ function admits the singular behavior (see \eqref{sing_Gamma}):
\begin{equation}
\Gamma\left(\frac{1-s}{\alpha}\right) \, \underset{s \rightarrow 1+\alpha n}{\sim} \, \frac{(-1)^n}{n!}\frac{\alpha}{-s+1+\alpha n}
\end{equation}
and therefore the residues associated to the Mellin-Barnes integral in \eqref{mu_MB} are:
\begin{equation}
\res (s = 1+\alpha n) \, = \, -\frac{(-1)^n\Gamma(1+\alpha n)}{n!\Gamma(\gamma(1+\alpha n) + 1 - \gamma)} \, \mu_1^n\, .
\end{equation}
Simplifying, taking minus the sum of this residues and the overall logarithm yields the formula \eqref{mu_series}
\end{proof}

\noindent
We may note that, when taking $\gamma = 1$ in formula \eqref{mu_series}, we are left with
\begin{equation}
\mu \, =  \, - \log \, \sum\limits_{n=0}^{\infty} \frac{(-1)^n\mu_1^n}{n!} \, = \, - \log \, e^{-\mu_1} \, = \, \mu_1
\end{equation}
as expected. Moreover, it follows from the classical Taylor expansion for $\log(1+x)$ that:
\begin{equation}
\mu \, = \, \frac{\Gamma(1+\alpha)}{\Gamma(1+\gamma\alpha)}\mu_1 \, + \, O\left(\mu_1^2\right)
\end{equation}
and, in the case $\alpha=2$, we have:
\begin{equation}\label{mu_series_BS}
\mu \, = \, -\frac{\sigma^2}{\Gamma(1+2\gamma)}  \, + \, O\left(\sigma^4\right)
\end{equation}
which resumes to the Gaussian parameter $-\frac{\sigma^2}{2}$ when $\gamma = 1$.

%In Fig. \ref{fig3}, we plot the evolution of $\mu_1$ in function of the parameters $\gamma$, $\sigma$ and $\alpha$; thanks to the exponential convergence of the series \eqref{mu_series}, it suffices to consider only the very first few terms of the series to obtain an excellent level of precision.
%\begin{figure}[t]
%\centering
%\includegraphics[scale=0.4]{Fig3.pdf}
%\caption{In the first graph, we plot the evolution of $\mu$ in function of $\gamma$ for different stability parameters $\alpha\in[1.6,1]$ and market volatility $\sigma = 20\%$. We only consider $\gamma>0.38$ so that the condition $\gamma>1-\frac{1}{\alpha}$ is fulfilled for any the chosen stabilities. In graph 2 and 3, we plot the evolution of $\mu$ in function of the market volatility and the stability parameter resp., for different values of the fractionality $\gamma$.}
%\label{fig3}
%\end{figure}

% For $\gamma < 1$ can be according to Refs.~\cite{Gorenflo99,KK16} the smearing kernel $g_\gamma(t,l)$ rewritten in terms of stable distribution so we get that
% \begin{equation}
% \mu = \ln \int_0^\infty \, \ud l \,  \frac{1}{\gamma l^{(\gamma+1)/\gamma}} L_{\gamma,1} \left(\frac{1}{l^{1/\gamma}}\right) e^{-(\sigma l)^\alpha \sec \left(\frac{\pi \alpha}{2}\right)}
% \end{equation}
% This representation is especially advantageous for $\gamma$ close to one, because the distribution $L_{\gamma,1}$ is peaked around $l=1$ (and for $\gamma \rightarrow 1$ becomes $\delta(x-1)$).

% For $\gamma > 1$ is possible *** to do more***

\section{Series representation of the European call option}

Let us now turn the attention to the series representation of Eq. \eqref{propagator}. This will be done in two steps. First, we use Mellin-Barnes representation for the Green function corresponding to space-time fractional solution. Second, we use the residue summation formula in order to obtain the double-series representation.

Let us assume that $S,K,r,\tau \, > \, 0$. Let $1< \alpha \leq 2$ and $0<\gamma \leq \alpha$. We will denote the vectors $\underline{Z}\in\mathbb{C}^2$ by:
\begin{equation}
\underline{Z} \, :=
\begin{bmatrix}
Z_1 \\ Z_2
\end{bmatrix}
\hspace*{1cm} Z_1,\,Z_2 \in \mathbb{C}
\end{equation}
% and recall that the usual $\wedge$ notation for differential forms stands for the properties
% \begin{equation}
% \left\{
% \begin{aligned}
% & \ud x \, \wedge \, \ud x =0 \\
% & \ud x \, \wedge \, \ud y \, = \, - \ud y \, \wedge \, \ud x
% \end{aligned}
% \right.
% \end{equation}
We will assume that the Carr-Wu maximal negative asymmetry hypothesis $\theta = \alpha - 2$ holds, and we will denote the call price by $C_{\alpha,\gamma}^{\alpha - 2}(S,K,r,\mu,\tau):=C_{\alpha,\gamma}(S,K,r,\mu,\tau)$.

\subsection{Mellin-Barnes representation for the call price}

 We first derive a representation for the call price \eqref{propagator} under the form of a complex integral in $\mathbb{C}^2$.
\begin{proposition}
Let $[\log]:=\log\frac{S}{K} + r\tau$ and let $P\subset\mathbb{C}^2$ be the polyhedra
\begin{equation}
P \, := \, \{ \underline{t}\in\mathbb{C}^2 \, , \, 0 < Re (t_2) < 1, \, Re(t_2-t_1)>1 \}
\end{equation}
Then, for any $\underline{c}\in P$, the following holds:
\begin{eqnarray}\label{Call_C2}
C_{\alpha,\gamma} (S,K,r,\mu,\tau) =&& \nonumber\\
\frac{K e^{-r \tau}}{\alpha} \int\limits_{c_1-i\infty}^{c_1+i\infty} \int\limits_{c_2-i\infty}^{c_2+i\infty}  \, (-1)^{-t_2}
\frac{\Gamma(t_2)\Gamma(1-t_2)\Gamma(-1-t_1+t_2)}{\Gamma(1-\frac{\gamma}{\alpha}t_1)} \,&& \nonumber\\
\times
(-[\log]-\mu\tau)^{1+t_1-t_2}(-\mu\tau^{\gamma})^{-\frac{t_1}{\alpha}}\frac{\ud t_1}{2i\pi}\wedge\frac{\ud t_2}{2i\pi}\, .&&
\end{eqnarray}
\end{proposition}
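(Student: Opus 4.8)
The plan is to start from the convolution formula \eqref{propagator} with $q=0$, substitute the Mellin--Barnes representation \eqref{Green_function_DF} of the Green function, and reduce the $y$-integral to a one-dimensional Mellin transform of the payoff, which then introduces a second complex variable. First I would change variables in \eqref{propagator} so that the payoff $[Se^{\tau(r+\mu)+y}-K]^+$ becomes a clean function of $y$; writing $[\log]:=\log\frac{S}{K}+r\tau$, the payoff is nonzero precisely when $y > -[\log]-\mu\tau$, i.e. when $-[\log]-\mu\tau+y>0$, and on that range it equals $K(e^{[\log]+\mu\tau+y}-1)$. The key analytic input is the Mellin transform of $[e^{u}-1]^+$ as a function of $u$ (or equivalently a Mellin--Barnes representation of the truncated exponential): one has, for $\mathrm{Re}(t_2)$ in a suitable strip,
\begin{equation}
[e^{u}-1]^+ \, = \, \int\limits_{c_2-i\infty}^{c_2+i\infty} \, \frac{\Gamma(t_2)\Gamma(1-t_2)}{\Gamma(1)}\,(-u)^{-t_2}\,\frac{\ud t_2}{2i\pi}\, ,
\end{equation}
valid on the region where $u<0$ is replaced by the analytic continuation; more precisely I would use the Mellin pair for $x^{a}(1-x)^{b}_+$ type kernels, or simply note that $\Gamma(t_2)\Gamma(1-t_2)=\pi/\sin(\pi t_2)$ is the Mellin transform of the relevant Beta-type kernel. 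This is the step where one must be careful about which power of $(-\,\cdot\,)$ appears and where the strip $0<\mathrm{Re}(t_2)<1$ comes from.

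Next I would insert this representation into the convolved integral, so that the call price becomes a triple integral: one over $y$ (or the shifted variable $u=-[\log]-\mu\tau+y$), and two over $t_1,t_2$. The $y$-integral is then of the form $\int (\text{power of }y)\,(\text{power of } (x/(-\mu\tau^\gamma)^{1/\alpha}))^{t_1}\,\ud y$, which after the substitution $x=y$ and collecting powers is again an Euler Gamma integral: it produces a factor $\Gamma(-1-t_1+t_2)$ together with the power $(-[\log]-\mu\tau)^{1+t_1-t_2}$, upon which the $\Gamma(t_2)\Gamma(1-t_2)$ from the payoff kernel and the $\Gamma(\frac{t_1}{\alpha})\Gamma(1-\frac{t_1}{\alpha})\Gamma(1-t_1)/[\Gamma(1-\frac{\gamma}{\alpha}t_1)\Gamma(\frac{\alpha-\theta}{2\alpha}t_1)\Gamma(1-\frac{\alpha-\theta}{2\alpha}t_1)]$ from \eqref{Green_function_DF} combine. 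At this point I would impose the maximal negative asymmetry condition $\theta=\alpha-2$: then $\frac{\alpha-\theta}{2\alpha}=\frac{1}{\alpha}$, so the Green-function Gamma fraction collapses dramatically — $\Gamma(\frac{t_1}{\alpha})$ cancels against $\Gamma(\frac{1}{\alpha}t_1)$ in the denominator and $\Gamma(1-\frac{t_1}{\alpha})$ against $\Gamma(1-\frac{1}{\alpha}t_1)$ — leaving only $\Gamma(1-t_1)/\Gamma(1-\frac{\gamma}{\alpha}t_1)$. Multiplying through by the $1/\alpha$ prefactor and the $Ke^{-r\tau}$ gives exactly the integrand in \eqref{Call_C2}, including the $(-1)^{-t_2}$ and $(-\mu\tau^\gamma)^{-t_1/\alpha}$ factors.

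The main obstacle is bookkeeping of the domains of convergence, i.e. justifying that all these manipulations — swapping the $y$-integral with the two Mellin--Barnes contours, and performing the $y$-integral as a Gamma function — are legitimate on a common nonempty region of $(t_1,t_2)$-space. This is precisely the role of the polyhedron $P$: the condition $0<\mathrm{Re}(t_2)<1$ is the fundamental strip of the payoff kernel $\Gamma(t_2)\Gamma(1-t_2)$, while $\mathrm{Re}(t_2-t_1)>1$ is what makes the argument $-1-t_1+t_2$ of the new Gamma function have positive real part, i.e. guarantees convergence of the $y$-integral at its lower endpoint $u=0$. I would verify that for $\underline{c}\in P$ all contour integrals converge absolutely (using the decay $|\Gamma(x+iy)|\sim |y|^{x-1/2}e^{-\pi|y|/2}$ and the fact that the characteristic vector keeps the integrand integrable), apply Fubini to exchange the order, carry out the elementary $y$-integral, and then appeal to analytic continuation in $t_1,t_2$ to extend the identity to all of $P$. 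Verifying that $P$ is nonempty (e.g. $c_2=\tfrac12$, $c_1$ large negative) and contained in the relevant fundamental strips is the final routine check. The remaining algebraic simplification under $\theta=\alpha-2$ is then purely mechanical.
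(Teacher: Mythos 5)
Your overall strategy (Green function in Mellin--Barnes form under $\theta=\alpha-2$, then a second Mellin variable to handle the payoff, then a Beta-type $y$-integral) is the right skeleton, and your simplification of the Gamma fraction in \eqref{Green_function_DF} to $\Gamma(1-t_1)/\Gamma(1-\frac{\gamma}{\alpha}t_1)$ is exactly what the paper uses. However, the central identity you build on is false: for $0<\mathrm{Re}(t_2)<1$ the kernel $\Gamma(t_2)\Gamma(1-t_2)=\pi/\sin(\pi t_2)$ is the Mellin transform of $x\mapsto\frac{1}{1+x}$, so the right-hand side of your displayed formula is (formally, by summing the residues at $t_2=-n$) $\sum_{n\ge 0}u^n=\frac{1}{1-u}$, not $[e^u-1]^+$. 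More structurally, no one-dimensional kernel $\int\phi(t_2)(-u)^{-t_2}\,\frac{\ud t_2}{2i\pi}$ plays the role you assign to it here: in the paper the truncation ``$+$'' is enforced by the finite lower limit $-[\log]-\mu\tau$ of the $y$-integral, and the factor $\Gamma(1-t_2)$ that you attribute to the payoff kernel actually originates from that Beta integral. As a consequence your bookkeeping does not reproduce the integrand of \eqref{Call_C2}: with your kernel and the $y^{t_1-1}$ power coming from the Green function, the $y$-integral gives $(-[\log]-\mu\tau)^{t_1-t_2}\,\Gamma(1-t_2)\Gamma(t_2-t_1)/\Gamma(1-t_1)$, so the exponent and the third Gamma are both off by $1$ and an extra $\Gamma(1-t_2)$ survives.

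What the paper does instead, and what your plan is missing, is an integration by parts in $y$: it rewrites $\int_{-[\log]-\mu\tau}^{\infty}(e^{[\log]+\mu\tau+y}-1)\,y^{t_1-1}\ud y$ as $-\frac{1}{t_1}\int_{-[\log]-\mu\tau}^{\infty}e^{[\log]+\mu\tau+y}\,y^{t_1}\ud y$, then applies the Cahen--Mellin formula \eqref{Cahen} to the exponential alone (kernel $\Gamma(t_2)$ only, producing the branch factor $(-1)^{-t_2}$), and evaluates the remaining $y$-integral as a Beta integral, $\int_{-[\log]-\mu\tau}^{\infty}([\log]+\mu\tau+y)^{-t_2}y^{t_1}\ud y=(-[\log]-\mu\tau)^{1+t_1-t_2}\,\Gamma(1-t_2)\Gamma(-1-t_1+t_2)/\Gamma(-t_1)$; the identity $-t_1\Gamma(-t_1)=\Gamma(1-t_1)$ then cancels the Green-function Gamma and yields precisely \eqref{Call_C2}. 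The integration by parts (or some substitute for it) is essential: without it the ``$-1$'' piece of the payoff produces the divergent integral $\int^{\infty}y^{t_1-1}\ud y$ if treated separately, and there is no mechanism left to eliminate $\Gamma(1-t_1)$. Your identification of the polyhedron $P$ with the positivity of the arguments $t_2$, $1-t_2$, $-1-t_1+t_2$ agrees with the paper (note only that $\mathrm{Re}(t_2-t_1)>1$ controls the behaviour of the Beta integral at $y\to\infty$, while $\mathrm{Re}(t_2)<1$ controls its lower endpoint), but the incorrect payoff representation and the resulting mismatch of Gamma factors is a genuine gap, not a matter of routine bookkeeping.
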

\begin{proof}
With maximal asymmetry hypothesis, the Green function $g_{\alpha,\gamma}^{\alpha - 2} (y,\tau) \, := \, g_{\alpha,\gamma}(y,\tau)$ simplifies into (see eq. \eqref{Green_function_DF}):
\begin{equation}
g_{\alpha,\gamma}(y,\tau) \, = \, \frac{1}{\alpha y} \int\limits_{c_1-i\infty}^{c_1+i\infty} \, \frac{\Gamma(1-t_1)}{\Gamma(1-\frac{\gamma}{\alpha}t_1)} \, \left( \frac{y}{(-\mu\tau^{\gamma})^{\frac{1}{\alpha}} } \right)^{t_1} \, \frac{\ud t_1}{2i\pi} \, 
\end{equation}
for $0 < c_1 < 1$. Inserting this into formula \eqref{propagator} yields:
\begin{multline}\label{propagator_2}
C_{\alpha,\gamma} (S,K,r,\mu,\tau) = \frac{K e^{-r \tau}}{\alpha} \,
 \int\limits_{c_1-i\infty}^{c_1+i\infty} \, \frac{\Gamma(1-t_1)}{\Gamma(1-\frac{\gamma}{\alpha}t_1)} \, \\
  \times  \int\limits_{-[\log]-\mu\tau}^{\infty} (e^{[\log]+\mu\tau + y}-1)y^{t_1-1} \, \ud y (-\mu\tau^{\gamma})^{-\frac{t_1}{\alpha}}\frac{\ud t_1}{2i\pi}\, .
\end{multline}
Here we have used the fact that $[Se^{(r+\mu)\tau +y}-K]^+ = K[e^{[\log]+\mu\tau+y}-1]^+ $. It is possible to integrate by parts in \eqref{propagator_2}, with the result:
\begin{multline}\label{propagator_3}
C_{\alpha,\gamma} (S,K,r,\mu,\tau) = -\frac{K e^{-r \tau}}{\alpha} \,
 \int\limits_{c_1-i\infty}^{c_1+i\infty} \, \frac{\Gamma(1-t_1)}{\Gamma(1-\frac{\gamma}{\alpha}t_1)} \, \frac{1}{t_1} \, \\ \times \int\limits_{-[\log]-\mu\tau}^{\infty} e^{[\log]+\mu\tau + y} \, y^{t_1} \, \ud y (-\mu\tau^{\gamma})^{-\frac{t_1}{\alpha}}\frac{\ud t_1}{2i\pi}\, .
\end{multline}
Let us introduce the following Mellin-Barnes representation of the exponential term (see Eq. \eqref{Cahen}):
\begin{equation}
e^{[\log]+\mu\tau + y} \, = \, \int\limits_{c_2-i\infty}^{c_2+i\infty} \, (-1)^{-t_2} \Gamma(t_2) ([\log] + \mu\tau + y )^{-t_2} \, \frac{dt_2}{2i\pi}\, ,
\end{equation}
for $c_2 > 0$. Plugging into \eqref{propagator_3} transforms the integral over the Green variable $y$ into a \textit{Beta integral} \cite{Abramowitz72}, with the result:
\begin{eqnarray}
 \int\limits_{-[\log]-\mu\tau}^{\infty} \, ([\log] + \mu\tau + y )^{-t_2} y^{t_1} \, \ud y \, \nonumber\\
 = \, (-[\log]-\mu\tau)^{1+t_1-t_2} \frac{\Gamma(1-t_2)\Gamma(-1-t_1+t_2)}{\Gamma(-t_1)}
\end{eqnarray}
Replacing in \eqref{propagator_3}, using the functional relation $-t_1\Gamma(-t_1) = \Gamma(1-t_1)$ and simplifying the fraction, we are left with the double integral in $\mathbb{C}^2$, as shown in Eq. \eqref{Call_C2}.
%\begin{multline}
%C_{\alpha,\gamma} (S,K,r,\mu,\tau) = \frac{K e^{-r \tau}}{\alpha} \, \\
%\int\limits_{c_1-i\infty}^{c_1+i\infty} \int\limits_{c_2-i\infty}^{c_2+i\infty}  \, (-1)^{-t_2}
%\frac{\Gamma(t_2)\Gamma(1-t_2)\Gamma(-1-t_1+t_2)}{\Gamma(1-\frac{\gamma}{\alpha}t_1)}
%(-[\log]-\mu\tau)^{1+t_1-t_2}(-\mu\tau^{\gamma})^{-\frac{t_1}{\alpha}}\frac{\ud t_1}{2i\pi}\wedge\frac{\ud t_2}{2i\pi}
%\end{multline}
The integral converges when the arguments of the Gamma functions in the numerator are positive, that is whenever $t_2>0$, $t_2<1$ and $-1-t_1+t_2>0$.
\end{proof}

\noindent
The integral formula \eqref{Call_C2} can be expressed as a sum of residues in some region of $\mathbb{C}^2$, which is shown in the next section.

\subsection{Residue summation}

\begin{theorem}
Let $1 < \alpha \leq 2$ and $1-\frac{1}{\alpha} < \gamma \leq \alpha$. Under maximal negative asymmetry hypothesis (i.e., $\theta=\alpha-2$), the European call price driven by space-time fractional diffusion is:
\begin{equation}\label{Formula}
C_{\alpha,\gamma}(S,K,r,\mu,\tau) \, = \, \frac{Ke^{-r\tau}}{\alpha} \, \sum\limits_{\substack{n = 0 \\ m = 1}}^{\infty} \, \frac{(-1)^n}{n!\Gamma(1-\gamma\frac{n-m}{\alpha})} (-[\log]-\mu\tau)^{n}(-\mu\tau^{\gamma})^{\frac{m-n}{\alpha}}\, .
\end{equation}
\end{theorem}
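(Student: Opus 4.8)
Here is how I would approach the proof.

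The plan is to read the double integral \eqref{Call_C2} as a Mellin--Barnes integral in $\mathbb{C}^2$ and to evaluate it by the residue theorem \eqref{res_thm_C2}. Its differential form $\omega$ carries three families of numerator singularities — from $\Gamma(t_2)$ at $t_2=-k$, from $\Gamma(1-t_2)$ at $t_2=1+k$, and from $\Gamma(-1-t_1+t_2)$ at $t_2-t_1=1+k$, $k\in\mathbb{N}$ — while the denominator $\Gamma(1-\frac{\gamma}{\alpha}t_1)$ contributes only zeros. First I would compute the characteristic vector by summing the coefficient vectors of the numerator Gammas and subtracting that of the denominator:
\[
\Delta \, = \, (0,1) + (0,-1) + (-1,1) - \left(-\frac{\gamma}{\alpha},0\right) \, = \, \left(\frac{\gamma}{\alpha}-1,\,1\right) \, \neq \, 0 ,
\]
and then check that the hypotheses $1<\alpha\le 2$, $1-\frac{1}{\alpha}<\gamma\le\alpha$ are exactly what is needed to anchor a cone $\Pi$ with vertex $\underline{c}\in P$ inside the admissible half-plane $\Pi_\Delta$ and to make the resulting series absolutely convergent.

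Next I would split the divisors into the family $D_1$ coming from $\Gamma(t_2)\Gamma(1-t_2)$ and the family $D_2$ coming from $\Gamma(-1-t_1+t_2)$, and choose a compatible cone whose $t_2$-face lies in the strip $0<Re(t_2)<1$ — so that the poles $t_2=0,-1,-2,\dots$ of $\Gamma(t_2)$ are enclosed but those of $\Gamma(1-t_2)$ at $t_2=1,2,\dots$ are not — and whose $t_1$-face encloses the poles $-1-t_1+t_2=0,-1,-2,\dots$. Then $\Pi\cap(D_1\cap D_2)$ is precisely the lattice of points $(t_1,t_2)=(n-m,\,1-m)$ obtained by intersecting $t_2=1-m$ with $-1-t_1+t_2=-n$ for $n\in\mathbb{N}$, $m\ge1$, and \eqref{res_thm_C2} expresses $C_{\alpha,\gamma}$ as $\frac{Ke^{-r\tau}}{\alpha}$ times the absolutely convergent sum of the residues of the form over this lattice.

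Finally, at each point $(n-m,1-m)$ the only singular factors are the simple pole of $\Gamma(t_2)$ and the simple pole of $\Gamma(-1-t_1+t_2)$, whereas $\Gamma(1-t_2)=\Gamma(m)=(m-1)!$ and $1/\Gamma(1-\frac{\gamma}{\alpha}t_1)$ are regular. Applying \eqref{sing_Gamma} to both singular factors, reducing the product of the two polar factors to the normal form $\frac{\ud w_1\wedge \ud w_2}{w_1 w_2}$ to read off the residue (and keeping track of the orientation of the contour, which fixes an overall sign), the factor $(m-1)!$ from $\Gamma(1-t_2)$ cancels the $1/(m-1)!$ from the residue of $\Gamma(t_2)$, and the two powers of $(-1)$ coming from $(-1)^{-t_2}$ and from the residue of $\Gamma(t_2)$ multiply to $1$, leaving the coefficient $\frac{(-1)^n}{n!\,\Gamma(1-\gamma\frac{n-m}{\alpha})}$; the two power factors evaluate to $(-[\log]-\mu\tau)^{1+t_1-t_2}=(-[\log]-\mu\tau)^n$ and $(-\mu\tau^\gamma)^{-t_1/\alpha}=(-\mu\tau^\gamma)^{(m-n)/\alpha}$, and summing over $n\ge0$, $m\ge1$ yields \eqref{Formula}. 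I expect the main obstacle to be the second step: producing an explicit cone that is simultaneously compatible with the chosen divisor splitting, has its vertex in $P$, and sits inside $\Pi_\Delta$ — this is where the constraint $\gamma>1-\frac{1}{\alpha}$ is genuinely used — together with pinning down the orientation conventions so that the residues carry $(-1)^n$ and not $(-1)^{n+1}$.
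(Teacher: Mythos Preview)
Your proposal is correct and follows the paper's approach essentially line for line: same characteristic vector $\Delta=(\tfrac{\gamma}{\alpha}-1,\,1)$, same two divisor families (from $\Gamma(-1-t_1+t_2)$ and from $\Gamma(t_2)$, with the poles of $\Gamma(1-t_2)$ excluded by the cone), same lattice of contributing poles, and the same residue bookkeeping; the only cosmetic difference is that the paper first changes variables to $(u_1,u_2)=(-1-t_1+t_2,\,t_2)$ so that the poles sit at $(-n,-m)$ for $n,m\ge 0$ and then shifts $m\mapsto m+1$ at the end, whereas you parametrize directly by $(t_1,t_2)=(n-m,\,1-m)$ with $m\ge 1$ and land on \eqref{Formula} immediately. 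One small recalibration of your expectations: the explicit cone in the paper is $\Pi=\{\operatorname{Re}(t_2)<0,\ \operatorname{Re}(-t_1+t_2)<1\}$, and its inclusion in $\Pi_\Delta$ is argued from the slope bound $0\le 1-\tfrac{\gamma}{\alpha}<1$, i.e.\ from $\gamma\le\alpha$; the hypothesis $\gamma>1-\tfrac{1}{\alpha}$ is not what drives the cone construction here but rather enters through the well-definedness of the risk-neutral parameter $\mu$ (Proposition~2).
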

\begin{proof}
Let $\omega$ be the complex differential 2-form
\begin{multline}\label{Call_C2_2}
\omega \, : = \, (-1)^{-t_2}
\frac{\Gamma(t_2)\Gamma(1-t_2)\Gamma(-1-t_1+t_2)}{\Gamma(1-\frac{\gamma}{\alpha}t_1)}\, \\
\times
(-[\log]-\mu\tau)^{1+t_1-t_2}(-\mu\tau^{\gamma})^{-\frac{t_1}{\alpha}}\frac{\ud t_1}{2i\pi}\wedge\frac{\ud t_2}{2i\pi}
\end{multline}
so that the call price \eqref{Call_C2} can be written under the compact form
\begin{equation}
C_{\alpha,\gamma}(S,K,r,\mu,\tau) \, = \, \frac{Ke^{-r\tau}}{\alpha} \int\limits_{\underline{c}+i\mathbb{R}^2} \, \omega\, .
\end{equation}
The characteristic vector associated to $\omega$ is (see \cite{Tsikh94,Tsikh97}) :
\begin{equation}
\Delta \, = \,
\begin{bmatrix}
-1 + \frac{\gamma}{\alpha} \\ 1
\end{bmatrix}\, .
\end{equation}
Therefore, the half-plane of convergence one must considers:
\begin{equation}
\Pi_\Delta \, := \, \, \, \, \, \, \{ \underline{t} \in \mathbb{C}^2, \, Re( \Delta \, . \, \underline{t}) \,  \, < \,  \, Re( \Delta \, . \, \underline{c} \, ) \}
\end{equation}
is the one located under the line (see Fig. \ref{fig2}):
\begin{equation}
Re(t_2) \, = \, (1-\frac{\gamma}{\alpha}) (Re(t_1) \, - \, c_1) + c_2\, .
\end{equation}
\begin{figure}[t]
\centering
\includegraphics[scale=0.4]{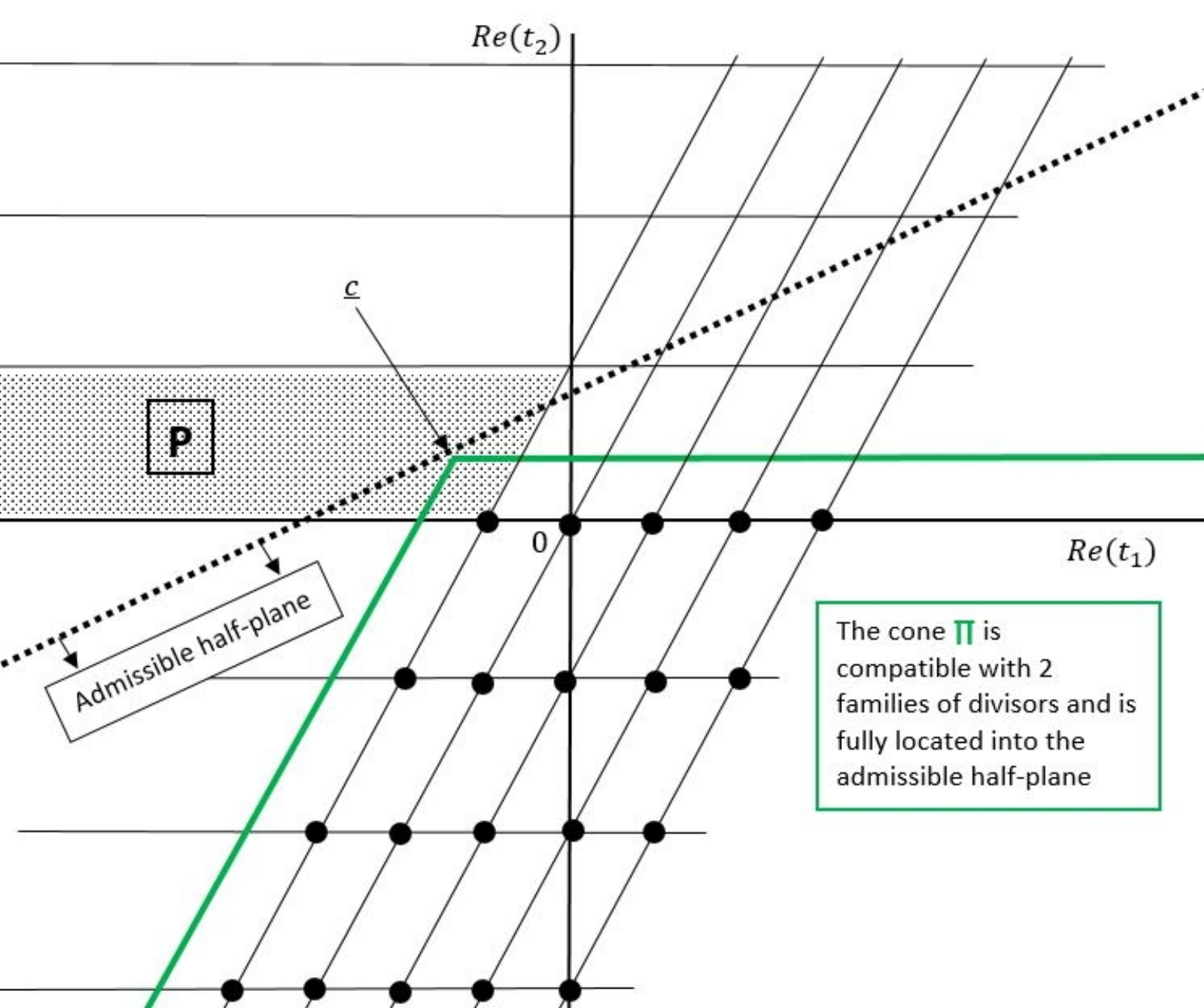}
\caption{The admissible region $\Pi_{\Delta}$, for the complex 2-form $\omega$, is the one located under the dotted oblique line. There is only one compatible cone in this region: the green cone, which is compatible with the two family of divisors $D_1$ (oblique lines) and $D_2$ (horizontal lines). The sum of the residues at every $\mathbb{C}^2$-singularity (points) into this cone is therefore equal to the integral of $\omega$.}
\label{fig2}
\end{figure}

\noindent Because $\gamma\leq\alpha$, we have $0 < 1-\frac{\gamma}{\alpha} \leq 1$ and therefore the cone $\Pi$ defined by
\begin{equation}
\Pi \, := \,\,\,\, \{\underline{t} \in \mathbb{C}^2 \, , \, Re(t_2) < 0 \, , \, Re(-t_1+t_2) < 1 \}
\end{equation}
is included in $\Pi_\Delta$. Moreover, it is compatible with the two families of divisors
\begin{equation}
\left\{
\begin{aligned}
& D_1 \, = \, \left\{\underline{t}\in\mathbb{C}^2,  -1 - t_1 + t_2 = -n_1 \,\, , \,\,\, n_1 \in\mathbb{N} \right\}
\\
& D_2 \, = \, \left\{\underline{t}\in\mathbb{C}^2, t_2 = -n_2 \,\, , \,\,\, n_2 \in\mathbb{N} \right\}
\end{aligned}
\right.
\end{equation}
induced by the $\Gamma(-1-t_1+t_2)$ and $\Gamma(t_2)$ functions respectively. To compute the residues associated to every element of the singular set $D := D_1 \cap D_2$, we change the variables:
\begin{equation}
\left\{
\begin{aligned}
& u_1 \, := \, -1 -t_1 + t_2 \\
& u_2 \, := \, t_2
\end{aligned}
\right.
\longrightarrow
\left\{
\begin{aligned}
& t_1 \, = \, -1+u_2-u_1 \\
& t_2 \, = \, u_2 %\\
%& dt_1 \wedge dt_2 \, = \, - du_1 \wedge du_2
\end{aligned}
\right.
\end{equation}
so that in this new configuration $\omega$ reads
\begin{multline}
\omega \, = \, (-1)^{-u_2} \, \\ \frac{\Gamma(u_2)\Gamma(1-u_2)\Gamma(u_1)}{\Gamma(1-\gamma \frac{-1+u_2-u_1}{\alpha})} \left(-[\log]-\mu\tau \right)^{-u_1} (-\mu\tau^\gamma)^{\frac{1+u_1-u_2}{\alpha}} \, \frac{\id u_1}{2i\pi} \wedge \frac{\id u_2}{2i\pi}
\end{multline}
With this new variables, the divisors $D_1$ and $D_2$ are induced by the $\Gamma(u_1)$ and $\Gamma(u_2)$ functions, and intersect at every point of the type $(u_1,u_2)=(-n,-m)$, $n,m\in\mathbb{N}$. From the singular behavior of the Gamma function \eqref{sing_Gamma} around a singularity, we can write:
\begin{multline}
\omega \, \underset{(u_1,u_2)\rightarrow (-n,-m)} {\sim} \,  \frac{(-1)^{n+m}}{n!m!} 
(-1)^{-u_2} \, \frac{\Gamma(1-u_2)}{\Gamma(1-\gamma \frac{-1+ u_2-u_1}{\alpha})} \\
\times \left(-[\log]-\mu\tau \right)^{-u_1} (-\mu\tau^\gamma)^{\frac{1+u_1-u_2}{\alpha}} \, \frac{\id u_1}{2i\pi (u_1+n)} \wedge \frac{\id u_2}{2i\pi (u_2 + m)}
\end{multline}
Taking the residues and simplifying:
\begin{equation}
\res_{(-n,-m)} \, \omega \, = \,  \frac{(-1)^n}{n!\Gamma(1-\gamma\frac{n-m-1}{\alpha})} \left(-[\log]-\mu\tau \right)^{n} (-\mu\tau^\gamma)^{\frac{1+m-n}{\alpha}}\, 
\end{equation}
From the residue theorem for Mellin-Barnes integrals in $\mathbb{C}^2$ (see equation \eqref{res_thm_C2}), we know that the integral \eqref{Call_C2_2} is equal to the sum of residues into the whole cone $\Pi$:
\begin{multline}
C_{\alpha,\gamma}(S,K,r,\mu,\tau) \, = \, \\ \frac{Ke^{-r\tau}}{\alpha} \, \sum\limits_{\substack{n = 0 \\ m = 0}}^{\infty} \, \frac{(-1)^n}{n!\Gamma(1-\gamma\frac{n-m-1}{\alpha})} (-[\log]-\mu\tau)^{n}  (-\mu\tau^{\gamma})^{\frac{1+m-n}{\alpha}}\, 
\end{multline}
Performing the index substitution $m\rightarrow m+1$ yields the representation \eqref{Formula} and completes the proof.
\end{proof}

\subsection{Special cases}

Let us discuss several special parameter choices corresponding to well-known diffusion models: 
%(we omit the arguments of $C$):
\begin{itemize}
\item \underline{Space-fractional diffusion:} When $\gamma=1$ in the series \eqref{Formula}, the series expansion is simplified and corresponds to the call price under the so-called Finite Moment L\'evy Stable model \cite{Carr03}
\begin{equation}
C_{\alpha,1} \, = \, \frac{Ke^{-r\tau}}{\alpha} \, \sum\limits_{\substack{n = 0 \\ m = 1}}^{\infty} \, \frac{(-1)^n}{n!\Gamma(1-\frac{n-m}{\alpha})} (-[\log]-\mu\tau)^{n}(-\mu\tau)^{\frac{m-n}{\alpha}}
\end{equation}
where $\mu = \mu_1 \, = \, (\sigma/\sqrt{2})^\alpha \sec\frac{\pi\alpha}{2}$. When, additionally, $\alpha=2$, we get the series expansion for the Black-Scholes (BS) price 
%discussed in \cite{Aguilar17}:
\begin{equation}
C_{2,1} \, = \, \frac{Ke^{-r\tau}}{2} \, \sum\limits_{\substack{n = 0 \\ m = 1}}^{\infty} \, \frac{(-1)^n}{n!\Gamma(1-\frac{n-m}{2})} \left(-[\log]+\frac{\sigma^2}{2}\tau \right)^{n} \left(\frac{\sigma^2}{2}\tau \right)^{\frac{m-n}{2}}
\end{equation}

\item \underline{Neural diffusion:}
For $\alpha=\gamma$, the diffusion equation becomes a generalization of the wave equation \cite{luchko13} (obtained for $\alpha=2$). In this case, the ratio $\gamma/\alpha=1$, and therefore the formula can be expressed as
\begin{equation}
C_{\alpha,\alpha} = \frac{Ke^{-r\tau}}{\alpha} \sum\limits_{\substack{n = 0 \\ m = 1 \\ m\geq n}}^{\infty} \, \frac{(-1)^n}{n! (m-n)!} (-[\log]-\mu\tau)^{n}((-\mu)^{1/\alpha} \tau)^{m-n}
\end{equation}
Let us note that the neural diffusion is not typical for financial processes and it is more important from the theoretical point of view, because it represents the borderline case of fractional diffusion.

\item \underline{Time-fractional diffusion:} Taking $\alpha=2$ in \eqref{Formula} yields the series expansion for the time-fractional BS price:
\begin{equation}\label{BS_frac}
C_{2,\gamma} \, = \, \frac{Ke^{-r\tau}}{2} \, \sum\limits_{\substack{n = 0 \\ m = 1}}^{\infty} \, \frac{(-1)^n}{n!\Gamma(1-\gamma\frac{n-m}{2})} (-[\log]-\mu\tau)^{n}(-\mu\tau^{\gamma})^{\frac{m-n}{2}}
\end{equation}

\item \underline{At-the-money forward approximation of time-fractional diffusion:} Assuming that the asset is at-the-money forward, that is:
\begin{equation}
S \, = \, Ke^{-r\tau}
\end{equation}
then, by definition, $[\log]=0$ and the fractional BS price \eqref{BS_frac} becomes:
\begin{equation}\label{BS_frac_ATMF}
C_{2,\gamma}^{(ATMF)} \, = \, \frac{S}{2} \, \sum\limits_{\substack{n = 0 \\ m = 1}}^{\infty} \, \frac{(-1)^n}{n!\Gamma\left(1-\gamma\frac{n-m}{2}\right)} (-\mu)^{\frac{n+m}{2}}\tau^{\frac{(2-\gamma)n+\gamma m}{2}}
\end{equation}
As $\gamma<2$, the series \eqref{Call_C2_2} is a power series (which is not the case in the general series \eqref{BS_frac}, where negative powers arise) which starts for $n=0,m=1$ and goes as:
\begin{equation}
C_{2,\gamma}^{(ATMF)}\, = \, \frac{S}{2} \, \left[  \frac{\sigma}{\Gamma(1+\frac{\gamma}{2})} \, \sqrt{\frac{\tau^\gamma}{\Gamma(1+2\gamma)}}  \, + \, \mathcal{O}(\sigma^2)  \right]
\end{equation}
where we have used the approximation~\eqref{mu_series_BS} for the risk-neutral parameter. Taking $\gamma=1$ and recalling $\Gamma(\frac{3}{2})=\frac{\sqrt{\pi}}{2}$, we are left with
\begin{equation}
C_{2,1}^{(ATMF)}  \, = \, \frac{S}{\sqrt{2\pi}} \, \sigma\sqrt{\tau} \, + \mathcal{O}(\sigma^2) \, \simeq \, 0.4 S \sigma\sqrt{\tau}
\end{equation}
which the well-known Brenner-Subrahmanyam approximation for the BS price \cite{BS94}.
\end{itemize}

%This can be left for the paper in Fractal and Fractional
%We keep only the table with an example of convergence of partial sums.

\subsection{Convergence of the double-sum representation}
In order to demonstrate the speed of convergence, let us calculate the contributions of each term in the double-sum \eqref{Formula} for a typical option price. Table \ref{fig:series} provides an example of the series convergence of an option with realistic parameters $S=3800, K=4000, r=1\%, \sigma= 20\%, \tau = 1, \alpha = 1.7, \gamma =0.9$; we observe that the convergence is very fast. We see that for the numerical precision of three decimal places, it is only necessary to sum up to $n=6$ and $m=6$. 

\begin{table}[h!]
\centering
\begin{scriptsize}
\begin{tabular}{|c||ccccccc|}
  \hline
  % after \\: \hline or \cline{col1-col2} \cline{col3-col4} ...
 {\quad \bfseries n \textbackslash \ m }  &  1 & 2 & 3 & 4 & 5 & 6 & 7   \\
  \hline
  \hline
  0 & 429.751 & 60.850 & 7.216 & 0.749 & 0.070 & 0.006 & 0.000     \\
  1 & -203.666 & -37.572 & -5.320 & -0.6315 & -0.065 & -0.006 & -0.000   \\
  2 & 28.893 & 8.903 & 1.642 & 0.233 & 0.0.028 & 0.003 & 0.000    \\
  3 & 0.549 & -0.842 & -0.259 & -0.048 & -0.007 & -0.000 & -0.000   \\
  4 & -0.352 & -0.012 & 0.018 & 0.006 & 0.001 & 0.000 & 0.000   \\
  5 & -0.016 & 0.006 & 0.000 & -0.000 & -0.000 & -0.000 & -0.000  \\
  6 & 0.005 & 0.000 & -0.000 & -0.000 & 0.000 & 0.000 & 0.000   \\
  7 & 0.000 & -0.000 & -0.000 & 0.000 & 0.000 & -0.000 & -0.000   \\
  \hline
  Call & 255.162 & 286.495 & 289.792 & 290.090 & 290.126 & 290.128 & 290.128    \\
  \hline
\end{tabular}
\end{scriptsize}
\caption{Table containing the numerical values for the $(n,m)$-term in the series (\ref{Formula}) for the option price ($S=3800, \, K=4000, \, r=1\%, \sigma=20\%, \, \tau=1Y, \, \alpha=1.7$,$\gamma=0.9$). The call price converges to a precision of $10^{-3}$ after summing only very few terms of the series.}
\label{fig:series}
\end{table}

\section{Conclusions}
In this paper, we have introduced a new representation for the European option driven by a space-time fractional diffusion equation in the form of rapidly convergent double series \eqref{Formula}. This double series can be derived from the Mellin-Barnes integral representation of the European option with help of residue summation in $\mathbb{C}^2$. The series representation of the double-fractional option pricing model, which incorporates risk redistribution in both spatial and temporal domain, might be useful in real trading, since the formula can be easily used by practitioners without any deeper knowledge of advanced mathematical techniques (such as Mellin transform or residue summation in multidimensional complex analysis), and is an explicit function of observable market parameters. Moreover, it is possible to control the numerical precision of the pricing formula. Contrary to other representations, no numerical technique is needed to evaluate the option, which would typically be the case for complicated integral representations, where the integrals cannot be expressed analytically. 

Interestingly, the residue summation technique can be used in more applications, as shown for the case of the risk-neutral factor of the space-time fractional Green function obtained by the Esscher transform. One can think about more applications, as expressions for optimal hedging policies, optimal exercise times for American options, etc. One could 
even go beyond the field of financial processes and use the residue summation techniques to calculate general functions of random random variables driven by space-time fractional diffusion, or more generally, by any process, whose Green function can be expressed by means of Mellin-Barnes integral representation. 

\section*{Acknowledgements}
J. K. acknowledges support from the Austrian Science Fund, Grant No. I 3073, and from the Czech Science Foundation, Grant No. 17–33812L. 

ˇ% Cambridge University Press, Cambridge (1990).

%%%% example for article in FCAA journal %%%%%%%%%%%%%%%%%

%\bibitem{Kir}
 %V. Kiryakova,
 %A brief story about the operators of generalized
%fractional calculus.
 %\emph{Fract. Calc. Appl. Anal.} \textbf{11}, No 2 (2008), 201--218; DOI: ..........

%%%% example for journal's article %%%%%%%%%%%%%%%%%

%\bibitem{Moak}
 %D.S. Moak,
 %The $q$-analogue of the Laguerre polynomials.
%\emph{J. Math. Anal. Appl.} \textbf{81}, No 1 (1981), 20--47. % ; doi: ........

%%%% example for a paper in Proceedings %%%%%%%%%%%%%

%\bibitem{Rosbl}
% M. Rosenblum,
% Generalized Hermite polynomials and the Bose-like oscillator
 %calculus.
 %In: \emph{Operator Theory: Advances and Applications},
 %Birkh\"auser, Basel (1994), 369--396.
%%%%%%%%%%%%%%%%%%%%%%%%%%%%%%%%%%%%%%%%%%%%%%%%%%%%

%\end{thebibliography} %%%%%%%%%%%%%%%%%%%%%%%%%%%%%%%%

%%%%%%%%%% put authors' addresses here, in \it %%%%%%%%

 \bigskip \smallskip

 \it

 \noindent
   %(First) Author's full postal address
$^1$ BRED Banque Populaire, Modeling Department, 18 quai de la R\^{a}p\'{e}e, Paris - 75012, FRANCE\\[4pt]
  e-mail: jean-philippe.aguilar@bred.fr\\[4pt]
\hfill Received:  \\[12pt]
  % Second Author's address
$^2$  MAIF, 200 avenue Salvador Allende, Niort, FRANCE \\[4pt]
  e-mail: cyril.coste@maif.fr\\[12pt]
$^3$  Section for the Science of Complex Systems, CeMSIIS, Medical
University of Vienna, Spitalgasse 23, A-1090, Vienna, AUSTRIA \\[4pt]
$^4$ Complexity Science Hub Vienna, Josefst\"{a}dterstrasse 39, 1080 Vienna, AUSTRIA \\[4pt]
$^5$ Faculty of Nuclear Sciences and Physical Engineering, Czech Technical University in Prague, B\v{r}ehov\'{a} 7, 115 19, Prague, CZECH REPUBLIC \\[4pt]
  e-mail: jan.korbel@meduniwien.ac.at

\end{document}